\title{A Subclass of Mu-Calculus with the Freeze Quantifier
  Equivalent to B\"uchi Register Automata}
\newtheorem{definition}{Definition}
\newtheorem{theorem}{Theorem}
\newtheorem{lemma}{Lemma}
\newtheorem{example}{Example}
\newenvironment{proof}{\par\noindent\emph{Proof.}\ }{}
\tikzset{
  ->, 
  >=latex, 
  auto, 
  initial text={}, 
}
\newcommand{\LTLfo}{{\mathrm{LTL}^\downarrow}_{\mathrm{d}\omega}}
\newcommand{\muf}{\mu^\downarrow}
\newcommand{\mufd}{{\mu^\downarrow}_{\!\mathrm{d}}}
\newcommand{\mufo}{{\mu^\downarrow}_{\!\mathrm{d}\omega}}
\newcommand{\TT}{\mathtt{tt}}
\newcommand{\FF}{\mathtt{ff}}
\newcommand{\TO}{\mathbin{\rightarrow}}
\newcommand{\Gets}{\mathbin{\leftarrow}}
\newcommand{\calA}{\mathcal{A}}
\newcommand{\calM}{\mathcal{M}}
\newcommand{\Var}{\mathit{Var}}
\newcommand{\Varo}{\mathit{Var}_{\omega}}
\newcommand{\omegavariable}{$\omega$-variable}
\newcommand{\Vo}{V_{\omega}}
\newcommand{\Vtt}{V_{\TT}}
\newcommand{\At}{\mathit{At}}
\newcommand{\Env}{\mathit{Env}}
\newcommand{\lfp}{\mathrm{lfp}}
\newcommand{\Str}[1]{\mathrm{fst}(#1)}
\newcommand{\Snd}[1]{\mathrm{snd}(#1)}
\newcommand{\Fsigmaw}{F_{\sigma\!,w}}
\newcommand{\Fsigmaiw}{F_{\sigma_1,w}}
\newcommand{\FsigmaIwi}{F_{\sigma_i,w'}}
\newcommand{\Nat}{\mathbb{N}}
\newcommand{\powerset}[1]{\mathcal{P}(#1)}
\newcommand{\Dat}{\mathbb{D}}
\newcommand{\neXt}{\mathord{\mathsf{X}}}
\newcommand{\Until}{\mathbin{\mathsf{U}}}
\newcommand{\WeakUntil}{\mathbin{\mathsf{W}}}
\newcommand{\Global}{\mathord{\mathsf{G}}}
\newcommand{\Down}{\mathord{\downarrow}}
\newcommand{\Up}{\mathord{\uparrow}}
\newcommand{\Rangew}[2]{w^{[{#1}\mathbin{:}{#2}]}}
\newcommand{\Buchi}{B\"uchi}
\begin{document}
\maketitle
\begin{summary}
Register automaton (RA) is an extension of
finite automaton for dealing with data values
in an infinite domain.
RA has good properties such as
the decidability of the membership and emptiness problems.
In the previous work,
we proposed
disjunctive $\muf$-calculus ($\mufd$-calculus),
which is a subclass of
modal $\mu$-calculus with the freeze quantifier,
and showed that it has the same expressive power as RA\@.
However,
$\mufd$-calculus is defined as a logic
on finite words,
whereas temporal specifications in model checking are
usually given in terms of infinite words.
In this paper,
we re-define the syntax and semantics of $\mufd$-calculus
to be suitable for infinite words and
prove that the obtained temporal logic,
called $\mufo$-calculus,
has the same expressive power as \Buchi\ RA\@.
We have proved the correctness of
the equivalent transformation
between 
$\mufo$-calculus
and \Buchi\ RA
with a proof assistant software Coq.
\end{summary}
\begin{keywords}
  LTL with the freeze quantifier,
  $\mu$-calculus,
  \Buchi\ register automaton,
  equivalent transformation,
  model checking
\end{keywords}

\section{Introduction}
\label{sec:intro}

Register automaton (RA)~\cite{KF94} is an extension of
finite automaton (FA)
that has registers
for dealing with data values.
RA preserves good properties of FA such as
the closure property and
the decidability of the membership and
emptiness problems~\cite{SI00,NSV04}.
RA has been used as a formal model
in 
verification of systems~\cite{GDPT13} and
reactive synthesis~\cite{EFR21}.
Model checking is a technique
to verify whether every run of a model $\calM$
of a system
satisfies a given specification $\varphi$.
For a finite-state model of a system and a linear temporal logic (LTL)
specification,
the model checking
can be reduced to the emptiness problem of
a \Buchi\ automaton~\cite{CGKPV18}.
LTL model checking has also been shown to be decidable
for pushdown systems (PDS)~\cite{EKS03}
and register pushdown systems (RPDS)~\cite{STS21-IEICE-dec}.
These model-checking methods rely on the fact that
every LTL formula can be transformed into
an equivalent \Buchi\ automaton.
To extend these model-checking methods to models with registers,
we need an appropriate temporal logic to give a specification,
which can be transformed into an equivalent \Buchi\ RA\@.

In the previous work,
we proposed
disjunctive $\muf$-calculus
($\mufd$-calculus)~\cite{TOSS23},
which is a subclass of
modal $\mu$-calculus with the freeze quantifier
($\muf$-calculus)~\cite{JL07}
and is shown to have the same expressive power as RA\@.
Since the class of RA is not closed under complement
and an RA does not have the ability to simulate logical conjunction,
$\mufd$-calculus 
imposes restrictions on the usage of negation and conjunction.
On the other hand,
by a system of equations
for representing recursive properties,
$\mufd$-calculus can
represent an arbitrary cycle in an RA,
which cannot be represented by
temporal operators such as
$\neXt$ (next),
$\Until$ (until) and $\Global$ (global) in LTL\@.
We can say that $\mufd$-calculus is
one of the most expressive temporal logic
that can be equivalently transformed into RA\@.
However, $\mufd$-calculus is defined as a logic
on finite words, 
whereas temporal specifications in model checking are
usually given in terms of infinite words.

In this paper,
we re-define the syntax and semantics of $\mufd$-calculus
to be suitable for infinite words, and we
prove that the obtained temporal logic,
called $\mufo$-calculus,
has the same expressive power as \Buchi\ RA\@.
We have proved the correctness of
the equivalent transformation
between 
$\mufo$-calculus
and \Buchi\ RA
with a proof assistant software Coq.

The main difference between
$\mufd$-calculus and $\mufo$-calculus is that
the former requires only
the least solution of a system of equations
while the latter has to consider both
the least and greatest solutions of systems of equations.
For example,
a proposition
``every (resp.\ some)
odd position in a given infinite word
satisfies a given atomic proposition $p$,''
which can be represented by \Buchi\ RA,
is represented
in $\mufo$-calculus as
the greatest (resp.\ least)
solution of
a recursive equation $V = p \land \neXt\neXt V$
(resp.\
$V = p \lor \neXt\neXt V$).
To deal with both the least and greatest solutions,
$\muf$-calculus in \cite{JL07} classifies
each variable (i.e.\ the left-hand side of each equation)
into two types, $\mu$ and $\nu$.
A $\mu$-type (resp.\ $\nu$-type)
variable represents its least (resp.\ greatest) solution.
Moreover,
$\muf$-calculus has a syntactic restriction that
a $\mu$-type variable and a $\nu$-type one cannot
mutually depend on each other.
Similar restriction is imposed in
usual modal $\mu$-calculus~\cite{BW18}
that has $\mu$ and $\nu$ constructs
instead of systems of equations;
any two constructs cannot be mutually a subformula
of each other.
We do not follow these restrictions in $\mufo$-calculus,
because they prevent simple correspondence
between states of a \Buchi\ RA and
variables in $\mufo$-calculus equations.
Focusing on the fact that we only
need the same expressive power as \Buchi\ RA,
we define the semantics carefully and
keep the syntax and semantics of $\mufo$-calculus simple
without introducing restrictions
on the dependence between variables.

\section{Preliminaries}
\label{sec:preliminaries}

Let $\Nat$ be the set of natural numbers not including zero.
Let $[n] = \{1,\ldots,n\}$ for $n\in\Nat$.
$\powerset{X}$ denotes the power set of a set $X$.
$X^\omega$ is the set of infinite sequences over $X$.

Let $\At$ be a finite set of \emph{atomic propositions}.
Let $\Sigma = \powerset{\At}$ and we call it an \emph{alphabet}.
We assume a countable set $\Dat$ of \emph{data values}.
A sequence $w\in(\Sigma\times\Dat)^\omega$ is called
an infinite \emph{data word}.

For an infinite data word $w=(b_1,d_1)(b_2,d_2)\ldots$
where $b_i\in\Sigma$ and $d_i\in\Dat$ for $i\in\Nat$,
we let $w_i = (b_i,d_i)$ and
$\Rangew{i}{} = (b_i,d_i)(b_{i+1},d_{i+1})\ldots$
for $i\in\Nat$.
We also let
$\Str{(b,d)} = b$ and
$\Snd{(b,d)} = d$ for $b\in\Sigma$ and $d\in\Dat$.
Let $\bot\in\Dat$ be a data value designated as the initial value
of registers.

An \emph{assignment} of data values to $k$ registers is a $k$-tuple
$\theta\in D^k$.
The value of $r$-th register in an assignment $\theta$
is denoted by $\theta(r)$.
For $R\subseteq [k]$ and $d\in\Dat$,
$\theta[R\Gets d]$ is the assignment obtained from $\theta$
by updating the value of the $r$-th register for every $r\in R$
to~$d$.
Namely, $\theta[R\Gets d](r) = d$ for $r\in R$ and
$\theta[R\Gets d](r) = \theta(r)$ for $r\notin R$.
Let $\bot^k$ be the assignment that
assigns $\bot$ to all the $k$ registers.

\section{Disjunctive $\mu$-calculus with the freeze quantifier
($\mufo$-calculus)}

In the proposed logic named $\mufo$-calculus,
each specification is given as a system of equations
and each equation is given by a formula in
the disjunctive LTL with the freeze quantifier
($\LTLfo$) defined as follows.

\begin{definition}
\label{def:LTLfo}
\emph{$\LTLfo$ formulas} over
a set $\Var$ of variables, a set $\At$ of atomic propositions,
and $k$ registers
are defined by the following $\psi$,
where
$V\in\Var$, $p\in\At$,
$R\subseteq[k]$ and $r\in[k]$.
\begin{align*}
  \psi ::={} & V \mid \psi\lor\psi \mid
    {\Down_R}\neXt\psi\land\phi \mid \TT\,, \\
  \phi ::={} & \TT \mid \FF \mid p \mid \neg p \mid \Up_r \mid \neg\Up_r
    \mid \phi\land\phi\,.
\end{align*}
\emph{Basic $\LTLfo$ formulas} are
formulas defined by the above $\phi$.
Let $\Phi_k$ denote the set of basic $\LTLfo$ formulas.
\end{definition}

Temporal operator $\neXt$ (next) has the same meaning
as in the usual LTL\@.
The freeze operator $\Down_R$ represents the storing of
an input data value into registers specified by~$R$.
The look-up operator $\Up_r$ represents the proposition
that the stored value of the $r$-th register equals
the input data value.
Formal semantics of
these operators will be given later.
Since the class of RA is not closed under complement
and an RA does not have the ability to simulate logical conjunction,
negation can be applied to only atomic basic formulas,
and for a conjunction $\psi_1 \land \psi_2$,
at least one of $\psi_1$ and $\psi_2$ should
be a basic $\LTLfo$ formula.
Moreover, while the freeze operator $\Down_R$ can
simultaneously update any number of registers
with the same data value,
the operator must be followed by~$\neXt$.
We can consider $\neXt \psi$ is an $\LTLfo$ formula
since it is equivalent to $\Down_{\emptyset}\neXt \psi\land\TT$.
Similarly,
we can consider $\phi\in\Phi_k$ is
also an $\LTLfo$ formula
since $\phi$ is equivalent to $\Down_{\emptyset}\neXt\TT\land\phi$.

A basic $\LTLfo$ formula describes a proposition
on a single position in an infinite data word.
The semantics for a basic $\LTLfo$ formula
$\phi\in\Phi_k$
is provided by the relation
$w,i,\theta \models \phi$,
which intuitively means that
a position $i$ of an infinite data word $w$
satisfies $\phi$ when the contents of registers
equal~$\theta$.

\begin{definition}
For an infinite data word $w$, a position $i\in\Nat$,
and an assignment $\theta$,
\begin{alignat*}{2}
   w,i,\theta &\models p \quad
     &&{}:\Leftrightarrow
     p\in\Str{w_i}, \\
   w,i,\theta &\models {\Up_r}
     &&{}:\Leftrightarrow
     \theta(r) = \Snd{w_i}, \\
   w,i,\theta &\models \TT
     && \;\text{always holds},
\end{alignat*}
where $p\in\At$ and $r\in[k]$.
The semantics for negation and conjunction are the same as usual.
\end{definition}

\begin{definition}
\label{def:sys-of-eq}
A \emph{system of equations} (or a \emph{system})
$\sigma$ over
a set $\Var=\{V_1,\ldots,V_t\}$ of $t$ variables
is a mapping from
$\Var$ to $\LTLfo$ formulas over $\Var$.
A system $\sigma$ intuitively represents
a set of equations
$V_1=\sigma(V_1)$, $\ldots$\,, $V_t=\sigma(V_t)$.
For each system $\sigma$,
one variable is designated as its \emph{main variable}.
Moreover,
one or more variables are designated as
\emph{\omegavariable{}s},
and the set $\Varo$ $(\subseteq\Var)$ of
\omegavariable{}s is provided.
We assume that
$\Varo$ contains an \omegavariable\ $\Vtt$
such that $\sigma(\Vtt)=\TT$.
Moreover,
we assume that
for every $V\in\Varo$ and $V'\in\Var$,
$\sigma(V)=\sigma(V')$ implies
$V=V'$.\footnote{%
  This assumption causes no loss of generality because
  one can change
  $\sigma(V')$ into $\sigma(V')\lor(\FF\land\neXt\TT)$
  without altering the meanings.
}
\end{definition}

Intuitively, an $\LTLfo$ formula containing a variable $V$ equals
the $\LTLfo$ formula obtained by replacing $V$ with $\sigma(V)$.
For a recursively defined variable $V$,
if $V\in\Varo$,
then we consider $V$ intuitively equals
the formula obtained by replacing $V$ with $\sigma(V)$
\emph{infinitely many} times.
If $V\notin\Varo$,
then $V$ intuitively equals
the formula obtained by replacing $V$ with $\sigma(V)$
arbitrary finite times.
For example,
a formula $\Global\phi$ using
the global operator $\Global$ in LTL,
which means
``every position of a given infinite word satisfies $\phi$,''
can be simulated by an equation
$V=\phi\land\neXt V$ where $V\in\Varo$.
On the other hand,
a formula $\phi\Until\psi$
using the until operator $\Until$ in LTL,
which means
``some position of a given infinite word satisfies $\psi$ and
every position before that position satisfies $\phi$,''
can be simulated by an equation
$V=\psi\lor(\phi\land\neXt V)$ where $V\notin\Varo$.

Formally,
we introduce
a 5-tuple $(i,\theta; j,\theta',x)$
where $i,j\in\Nat$ are positions in an infinite word,
$\theta,\theta'$ are assignments,
and $x$ is an \omegavariable,
and let the ``value'' of each variable $V\in\Var$
be a set of these 5-tuples.
Intuitively, the value of a variable $V$ contains
$(i,\theta;j,\theta',x)$ if and only if
$(w,i,\theta)$ satisfies $V$
under the assumption that
$(w,j,\theta')$ satisfies $x$,
where $w$ is a given infinite word.
The sentence
``$(w,i,\theta)$ satisfies $V$'' intuitively means
that $\sigma(V)$ holds at position $i$ in word $w$
when the contents of registers equal~$\theta$.
If we were considering only the least solution
of a system of equations,
then it was sufficient to let the value of a variable
be a set of pair $(i,\theta)$
as the same as $\mufd$-calculus in~\cite{TOSS23}.
Instead of considering an infinite substitution
of $\sigma(V)$ for an \omegavariable\ $V$,
we consider an infinite sequence $i_1,i_2,\ldots$ of
positions each of which satisfies~$V$.
To handle such a sequence, we use 5-tuple
$(i,\theta;j,\theta',V)$ instead of pair $(i,\theta)$,
where $(j,\theta')$ represents ``the next position''
of $(i,\theta)$ satisfying $V$
(and the contents of registers at that position).
The tuple asserts that
whether $(w,i,\theta)$ satisfies $V$
depends on whether
$(w,j,\theta')$ satisfies $V$.
The latter may also depend on
whether some
$(w,j_1,\theta_1)$ satisfies $V$.
We consider $(w,i,\theta)$ satisfies $V$
if such an infinite chain of dependence exists.

An \emph{environment} is the mapping from a variable to
the above-mentioned set of 5-tuples.
\begin{definition}
An \emph{environment}
for a given infinite data word $w\in(\Sigma\times\Dat)^\omega$
is a mapping of type
$\Var\to\powerset{\Nat \times \Dat^k \times \Nat \times \Dat^k
 \times \Varo}$.
Let $\Env_w$ be the set of environments for $w$.
\end{definition}

\begin{definition}
\label{def:ltlfo-semantics}
The following relation gives
the semantics for each variable in
$\LTLfo$ formulas over a given environment.
For an infinite data word $w$, a position $i\in\Nat$,
an assignment $\theta$, an environment $u\in\Env_w$,
and a variable $V\in\Var$,
\begin{alignat*}{2}
   w,i,\theta &\models_{u} V
     &{}:\Leftrightarrow{}&
     \exists j,\theta',x\colon
     (i,\theta;j,\theta',x)\in u(V), \\ &&&\ 
     i < j
     \text{ and }
     w,j,\theta'\models_{u} x.
\end{alignat*}
\end{definition}

A system $\sigma$ of equations specifies
an environment
under which all equations $V=\sigma(V)$ hold.
Such an environment can be seen as a \emph{solution} of
the system of equations.
We define an augmented version of
the semantic relation $\models_u$ as follows,
and let a solution of $\sigma$ be
an environment $u$ satisfying
$w,i,\theta,j,\theta',x\models_u V \iff
 w,i,\theta,j,\theta',x\models_u \sigma(V)$.
\begin{definition}
\label{def:ltlfo-semantics-fin}
For an infinite data word $w$, positions $i,j\in\Nat$ such that $i\le j$,
assignments $\theta,\theta'$, $x\in\Varo$,
and an environment $u\in\Env_w$,
\begin{alignat*}{2}
   w,i,\theta,j,\theta',x &\models_{u} V
     &{}:\Leftrightarrow{}&
     (i,\theta;j,\theta',x)\in u(V), \\
   w,i,\theta,j,\theta',x &\models_{u} \psi_1\lor\psi_2
     &{}:\Leftrightarrow{}&
     (w,i,\theta,j,\theta',x \models_{u} \psi_1) \\ && \text{or }&
     (w,i,\theta,j,\theta',x \models_{u} \psi_2), \\
   w,i,\theta,j,\theta',x &\models_{u} {\Down_R}\neXt\psi\land\phi \
     &{}:\Leftrightarrow{}&
     i < j \text{ and }
     w,i,\theta\models\phi \\
     \text{and}
     & \rlap{ $(w,i+1,\theta[R\Gets\Snd{w_i}],j,\theta',x \models_{u} \psi)$,}
     \\
   w,i,\theta,j,\theta',x &\models_{u} \TT \
     &{}:\Leftrightarrow{}&
     \theta = \theta' \text{ and } x=\Vtt,
\end{alignat*}
where $V\in\Var$, $R\subseteq[k]$, and $\phi\in\Phi_k$.
\end{definition}

The (least) solution of $\sigma$
can be represented as
the \emph{least fixed point} of
the variable updating function $\Fsigmaw$
in Definition~\ref{def:mapping-F}.
We define a partial order $\subseteq$ over $\Env_w$ as:
\begin{equation*}
    u_1 \subseteq u_2 \; :\Leftrightarrow \;
    u_1(V)\subseteq u_2(V) \text{ for every } V\in\Var.
\end{equation*}
Let $\emptyset$ denote the environment
that maps every $V\in\Var$ to $\emptyset$.
Obviously, $\emptyset\subseteq u$ for any environment~$u$.
The least upper bound $u_1\cup u_2$ of $u_1,u_2\in\Env_w$
can be represented as
$(u_1\cup u_2)(V) = u_1(V)\cup u_2(V)$ for every $V\in\Var$.
Then,
$(\Env_w,{\subseteq})$ is
a complete partial order (CPO)\footnote{%
  A partial order $(D,{\le})$ is a 
  CPO if every directed subset $S\subseteq D$
  has its least upper bound~$\bigvee S$.
  A set $S$ is \emph{directed} if for every finite subset
  $S_0\subseteq_{\mathrm{fin}} S$, an upper bound of $S_0$
  exists in~$S$.
  $(\Env_w,{\subseteq})$ is a CPO because
  every (not necessarily directed) subset $S\subseteq\Env_w$
  has its least upper bound $\bigcup S$.
}
\cite[Sect.~5.2]{Mit96}
with the least element $\emptyset$.

\begin{definition}
\label{def:mapping-F}
For a system $\sigma$ of equations over $\Var$
and an infinite data word $w \in (\Sigma \times\Dat)^\omega$,
we define a \emph{variable updating mapping}
$\Fsigmaw:\Env_w\to\Env_w$ as:
\begin{align*}
  & \Fsigmaw(u)(V) := \\ & \
  \begin{cases}
    \begin{aligned}
    \{(i,\theta;j,\theta',x) \mid
      w,i,\theta,j,\theta',x \models_{u} \sigma(V) \} \\
    {}\cup
      \{(i,\theta;i,\theta,V)\mid i\in\Nat,\ \theta\in\Dat^k\}
    \end{aligned}
      & \text{if } V\in\Varo, \\
    \{(i,\theta;j,\theta',x) \mid
      w,i,\theta,j,\theta',x \models_{u} \sigma(V) \}
      & \text{otherwise},
  \end{cases}
\end{align*}
for each $u\in\Env_w$ and $V\in\Var$.
\end{definition}

\begin{definition}
  For a variable updating mapping $F$,
  a \emph{fixed point} of $F$ is an environment $u$ that satisfies
  $F(u)=u$.
  A fixed point $u$ of $F$ is called
  the \emph{least fixed point} of $F$,
  denoted by $\lfp(F)$,
  if $u\subseteq u'$ for any fixed point $u'$ of $F$.
\end{definition}

We can prove that
$\Fsigmaw$ is \emph{continuous},%
\footnote{%
  A function $f:D\to D$ over
  a CPO $(D,{\le})$ is \emph{continuous} if
  $f(\bigvee S)=\bigvee_{x\in S}f(x)$ for
  every directed subset $S\subseteq D$.
  We can show
  $\Fsigmaw$ satisfies
  $\Fsigmaw(\bigcup S) = \bigcup_{u\in S}\Fsigmaw(u)$
  for every non-empty (not necessarily directed) $S\subseteq\Env_w$.
}
which implies that
$\Fsigmaw$ has the least fixed point satisfying
$\lfp(\Fsigmaw)=\bigcup_{n=0}^{\infty}\Fsigmaw^n(\emptyset)$.

\begin{definition}
  For an infinite data word $w$ and a system $\sigma$
  with its main variable $V_t$,
  we say $w$ satisfies $\sigma$, denoted by $w\models\sigma$,
  if $w,1,\bot^k \models_{\lfp(\Fsigmaw)} V_t$,
  where $\models_{\lfp(\Fsigmaw)}$ is the relation defined in
  Definition~\ref{def:ltlfo-semantics}.
\end{definition}

\begin{example}\label{example:sigma1}
Consider a system $\sigma_1$ of equations
\begin{equation*}
\sigma_1=\{\,
\begin{aligned}[t]
   &\Vtt=\TT,\ V_1=\Up_1,\\
   &V_2=V_1\lor(\neXt V_2\land(\neg\Up_1\land p_1)),\
   V_3=\Down_1\neXt V_2
   \,\}
\end{aligned}
\end{equation*}
over $\Var=\{\Vtt,V_1,V_2,V_3\}$ with
$\Varo=\{\Vtt\}$.
The main variable is $V_3$.
This system of equations
is intuitively equivalent to
$\Down_1\neXt((\neg\Up_1\land p_1)\Until\Up_1)$
where $\Until$ is the until operator in LTL\@,
and an infinite data word
$(b_1,d_1)(b_2,d_2)\ldots\in(\Sigma\times\Dat)^\omega$ satisfies
$\sigma_1$ if and only if
$d_1=d_n$ for $\exists n>1$
and $p_1\in b_i$ and $d_1\neq d_i$ for
$\forall i\in\{2,\ldots,n-1\}$.

Let $u^\ell=\Fsigmaiw^\ell(\emptyset)$.
By definition, $u^\ell$ for $\ell\ge1$
satisfies the following equation:%
\footnote{%
  Remember that $V_1=\Up_1$ is an abbreviation
  of $V_1=\Down_{\emptyset}\neXt\TT\land\Up_1$.
}
\begin{align*}
  u^\ell ={} &\{\,
    \Vtt \mapsto \{ (i,\theta; j,\theta,\Vtt) \mid i\le j \}, \\
    V_1 &\mapsto \{ (i,\theta; j,\theta,\Vtt) \mid i < j,\
      \theta(1)=\Snd{w_i} \}, \\
    V_2 &\mapsto u^{\ell-1}(V_1) \cup{}
         \{ (i,\theta; j,\theta',x) \mid
          i < j,\ \theta(1)\ne\Snd{w_i},\\
        &\qquad 
          p_1\in\Str{w_i},\
          (i+1,\theta; j,\theta',x) \in u^{\ell-1}(V_2) \}, \\
    V_3 &\mapsto \{ (i,\theta; j,\theta',x) \mid i < j,\\
        &\qquad
          (i+1,\theta[\{1\}\Gets\Snd{w_i}]; j,\theta',x)
          \in u^{\ell-1}(V_2) \}
  \,\}.
\end{align*}
Since $\sigma_1(\Vtt)$ and $\sigma_1(V_1)$
contain no variable,
$u^\ell(\Vtt)=u^1(\Vtt)$ and $u^\ell(V_1)=u^1(V_1)$
for any $\ell\ge1$.
Assume that $w=(\emptyset,5)\allowbreak(\{p_1,p_2\},4)\allowbreak
(\{p_1\},4)\allowbreak(\{p_1\},5)\allowbreak\ldots$
and $w_i=w_4$ for all $i\ge5$.
Let $[d]$ denote the assignment where the value of
register $1$ is~$d$. Then,
the following equations hold:
\begin{align*}
  u^1(V_1)&=\{\,
    \begin{aligned}[t]
    &(1,[5];2,[5],\Vtt), (1,[5];3,[5],\Vtt),\ldots,\\
    &(2,[4];3,[4],\Vtt), (2,[4];4,[4],\Vtt),\ldots, \\
    &(3,[4];4,[4],\Vtt), (3,[4];5,[4],\Vtt),\ldots, \\
    &(4,[5];5,[5],\Vtt), (4,[5];6,[5],\Vtt),\ldots, \\
    &(5,[5];6,[5],\Vtt), (5,[5];7,[5],\Vtt),\ldots\,
  \},
    \end{aligned} \\[\smallskipamount]
  u^1(V_2)&=\emptyset, \quad
  u^2(V_2)=u^1(V_1), \\
  u^3(V_2)&=u^2(V_2)\cup{}\\&\hphantom{{}={}}
  \{\,(3,[5];5,[5],\Vtt), (3,[5];6,[5],\Vtt),\ldots\,\},\\
  u^4(V_2)&=u^3(V_2)\cup{}\\&\hphantom{{}={}}
  \{\,(2,[5];5,[5],\Vtt), (2,[5];6,[5],\Vtt),\ldots\,\},\\
  u^5(V_2)&=u^4(V_2),
  \\[\smallskipamount]
  u^1(V_3)&=u^2(V_3)=\emptyset,\\
  u^3(V_3)&=\{ (2,[d];j,[4],\Vtt)\mid d\in\Dat,\ j\ge4 \}
    \cup{}\\ &\hphantom{{}={}}
    \{ (\rlap{$i,$}\hphantom{2,{}}
      [d];j,[5],\Vtt)\mid d\in\Dat,\
      4\le i,\ i+2\le j \},\\
  u^4(V_3)&=u^3(V_3),\\
  u^5(V_3)&=u^4(V_3)\cup
   \{ (1,[d];j,[5],\Vtt)\mid d\in\Dat,\ j\ge5 \}.
\end{align*}
For $\ell\ge6$, $u^\ell=u^5$. Thus,
$\lfp(\Fsigmaiw)=u^5$.
Since
$(i,[5];j,\allowbreak{}[5],\Vtt)\in u^5(\Vtt)$ for any $i,j$
with $5\le i< j$, we have
$w,5,[5]\models_{u^5}\Vtt$.
Therefore,
$w,1,[\bot]\models_{u^5} V_3$ and thus $w\models\sigma_1$.
\end{example}

\begin{example}\label{example:sigma2}
Let $\sigma_2$ be
the same system of equations
as $\sigma_1$ in Example~\ref{example:sigma1}
except that $\Varo=\{\Vtt,V_2\}$,
which is
intuitively equivalent to
$\Down_1\neXt((\neg\Up_1\land p_1)\WeakUntil\Up_1)$
where $\WeakUntil$ is the weak until operator.\footnote{%
$\phi\WeakUntil\psi \equiv (\phi\Until\psi)\lor(\Global\phi)$.
}
Every infinite data word that satisfies $\sigma_1$ also
satisfies $\sigma_2$.
Moreover, an infinite data word
$(b_1,d_1)(b_2,d_2)\ldots$ such that
$p_1\in b_i$ and $d_1\neq d_i$ for $\forall i\ge 2$
also satisfies~$\sigma_2$.

Consider $w'=(\emptyset,3)\,\Rangew{2}{}$ where
$w$ is the data word in Example~\ref{example:sigma1}.
Let $u_i^\ell = \FsigmaIwi^\ell(\emptyset)$
for $i\in\{1,2\}$ and
$u^\ell$ be the environment defined
in Example~\ref{example:sigma1}.
Then, we have
$u_1^\ell(V_3)=u_1^3(V_3)=u^3(V_3)$ for $\ell\ge4$.
Thus, $w'\!,1,\theta\not\models_{u_1^\ell} V_3$
for any $\theta$ and $\ell$.
On the other hand,
$u_2^1(V_2)=\{(i,[d];i,[d],V_2)\mid d\in\Dat,\ i\in\Nat\}$
since $V_2\in\Varo$,
which results in
$(i,[3];i+1,[3],V_2)\in u_2^2(V_2)$ for all $i\ge2$
and
$(1,[d];2,[3],V_2)\in u_2^2(V_3)$ for all $d\in\Dat$.
Therefore,
$w'\!,2,[3]\models_{u_2^2} V_2$ and
$w'\!,1,[\bot]\models_{u_2^2} V_3$,
and thus $w'\models\sigma_2$.
\end{example}

The following lemma can be proved in the same way
as in $\mufd$-calculus~\cite[Lemma~3]{TOSS23}.
In later sections,
we consider only $\LTLfo$ formulas in a normal form
stated in this lemma.
\begin{lemma}
\label{lemma:normalform}
  Let $\sigma$ be a system of equations in $\mufo$-calculus.
  We can translate $\sigma$
  into an equivalent system
  $\sigma'$ that consists of equations of the form
  $V=V'\lor V''$, $V=\Down_R\neXt V'\land\phi$, or $V=\TT$
  where $V,V',V''\in\Var$ and $\phi\in\Phi_k$,
  by applying the following translations (i) and (ii) iteratively.
  The newly introduced variables $V'$ and $V''$
  in (i) and (ii)
  should not be \omegavariable{}s.
\begin{enumerate}\itemsep=0pt
\renewcommand{\labelenumi}{(\roman{enumi})}
\item
  Replace $V=\psi_1\lor\psi_2$ with
  $V' = \psi_1$, $V'' = \psi_2$, and $V=V' \lor V''$.
\item
  Replace $V=\Down_R\neXt\psi_1 \land\phi$ with
  $V' = \psi_1$ and $V=\Down_R \neXt V' \land\phi$.
\end{enumerate}
\end{lemma}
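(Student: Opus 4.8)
The plan is to read Lemma~\ref{lemma:normalform} as two claims about the rewriting system given by rules (i) and (ii): that it terminates, and that each single step produces an \emph{equivalent} system (satisfied by the same infinite data words). Combining these over the finitely many steps that terminate the process yields a normal-form system equivalent to the original $\mufo$-calculus system $\sigma$.

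\textbf{Termination.} I would attach to each system $\sigma$ the measure $\|\sigma\| = \sum_{V\in\Var} m(\sigma(V))$, where $m(\psi)=0$ when $\psi$ is $\TT$, of the form $V'\lor V''$ with $V',V''\in\Var$, or of the form $\Down_R\neXt V'\land\phi$ with $V'\in\Var$ and $\phi\in\Phi_k$, and otherwise $m(\psi)$ is the number of operator occurrences in the parse tree of $\psi$. Applying (i) to an equation $V=\psi_1\lor\psi_2$ in which some $\psi_i$ is not already a variable replaces one equation of positive measure by three whose measures sum to strictly less, and similarly for (ii); every other equation is left untouched. Hence $\|\sigma\|$ strictly decreases, the process terminates, and $\|\sigma'\|=0$ exactly characterizes the three admissible equation shapes. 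A little bookkeeping is needed so that no two distinct variables receive the definition $\TT$: whenever a rule would create a fresh equation $V'=\TT$, one reuses $\Vtt$ instead, preserving the non-duplication assumption of Definition~\ref{def:sys-of-eq}.

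\textbf{Meaning preservation.} Fix an infinite data word $w$, let $\sigma'$ arise from $\sigma$ by one step of (i) or (ii) with fresh non-$\omega$-variable $V'$ (and $V''$ in case (i)), write $F=\Fsigmaw$, $F'=F_{\sigma',w}$, $\Var'=\Var\cup\{V',V''\}$, and for $u\in\Env_w$ over $\Var$ define $u^{+}\in\Env_w$ over $\Var'$ by $u^{+}(W)=u(W)$ for $W\in\Var$, $u^{+}(V')=\{(i,\theta;j,\theta',x)\mid w,i,\theta,j,\theta',x\models_u\psi_1\}$, and $u^{+}(V'')=\{(i,\theta;j,\theta',x)\mid w,i,\theta,j,\theta',x\models_u\psi_2\}$. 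I would verify equation by equation the two facts: (a) if $u$ is a fixed point of $F$, then $u^{+}$ is a fixed point of $F'$; (b) if $u'$ is a fixed point of $F'$, then $u'|_{\Var}$ is a fixed point of $F$. Both use that $\psi_1,\psi_2$ contain only old variables, so $\models_{u^{+}}$ and $\models_u$ agree on them, together with the key observation that, since $V'\notin\Varo$, at any fixed point $u'$ of $F'$ one has $u'(V')=\{(i,\theta;j,\theta',x)\mid w,i,\theta,j,\theta',x\models_{u'}\sigma'(V')\}$ with no added tuples $(i,\theta;i,\theta,V')$; that is, $V'$ is a pure abbreviation of $\psi_1$ under $u'$, which makes $\sigma'(V)$ semantically equal to $\sigma(V)$ at $u'$. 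From (a), $\lfp(F)^{+}$ is a fixed point of $F'$, so $\lfp(F')\subseteq\lfp(F)^{+}$ and hence $\lfp(F')|_{\Var}\subseteq\lfp(F)$; from (b), $\lfp(F')|_{\Var}$ is a fixed point of $F$, so $\lfp(F)\subseteq\lfp(F')|_{\Var}$. Therefore $\lfp(F')|_{\Var}=\lfp(F)$, and since the main variable $V_t$ is old, $w,1,\bot^k\models_{\lfp(F)}V_t \iff w,1,\bot^k\models_{\lfp(F')}V_t$, i.e.\ $w\models\sigma \iff w\models\sigma'$.

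\textbf{Main obstacle.} The delicate step is the ``no added reflexive tuples'' observation and, more broadly, ensuring that replacing a subformula by a fresh variable neither creates nor destroys the infinite dependence chains on which Definition~\ref{def:ltlfo-semantics} relies through the condition $w,j,\theta'\models_u x$. This is exactly why the lemma requires the new variables to lie outside $\Varo$: an $\omega$-variable would contribute tuples $(i,\theta;i,\theta,V')$ that could anchor spurious chains and break equivalence, whereas a non-$\omega$ variable is semantically transparent. Once this is pinned down, the remaining case analysis in (a) and (b) is routine and runs parallel to the proof of \cite[Lemma~3]{TOSS23}, the only extra bookkeeping being the $\Varo$-dependent clause of $\Fsigmaw$ in Definition~\ref{def:mapping-F}, which is carried along unchanged for the untouched $\omega$-variables.
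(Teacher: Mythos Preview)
Your proposal is correct. The paper itself does not supply a proof of Lemma~\ref{lemma:normalform}; it merely remarks that the lemma ``can be proved in the same way as in $\mufd$-calculus~\cite[Lemma~3]{TOSS23}.'' Your sketch therefore goes beyond what the paper offers here: you give an explicit termination measure and, for meaning preservation, a clean fixed-point correspondence (extend a fixed point $u$ of $\Fsigmaw$ to $u^{+}$, restrict a fixed point $u'$ of $F_{\sigma'\!,w}$ to $u'|_{\Var}$, then sandwich the least fixed points). You also correctly isolate the only genuinely new ingredient relative to the finite-word case, namely that the fresh variables must lie outside $\Varo$ so that the $\Varo$-clause of Definition~\ref{def:mapping-F} does not inject the reflexive tuples $(i,\theta;i,\theta,V')$---this is exactly what makes the fresh variable a transparent abbreviation and what keeps the infinite chains of Definition~\ref{def:ltlfo-semantics} unchanged. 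Since the student argument is the natural adaptation of \cite[Lemma~3]{TOSS23} to the present $\omega$-semantics and you explicitly say so, your approach and the paper's (deferred) approach coincide.

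One small wrinkle in the termination half: rule~(i) applied to $V=\psi_1\lor\psi_2$ with $\psi_1$ already a variable would manufacture an equation $V'=\psi_1$ that is in none of the three normal shapes yet is also outside the domain of rules~(i) and~(ii), and your measure assigns it $0$. The standard fix---skip the fresh name and reuse $\psi_i$ directly whenever $\psi_i\in\Var$---handles this, exactly as you already do for $\TT$ via~$\Vtt$; it does not affect the semantic argument.
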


\section{\Buchi\ register automata}
\label{sec:BRA}

\begin{definition}
A \emph{\Buchi\ register automaton with $k$ registers} ($k$-BRA)
over $\Sigma$ and $\Dat$ is a quadruple
$\calA = (Q,{q_0},\delta,F)$, where
$Q$ is a finite set of \emph{states},
$q_0\in Q$ is the \emph{initial state},
$\delta \subseteq Q \times (\Phi_{k}\cup\{\varepsilon\})
  \times \powerset{[k]} \times Q$ is a set of \emph{transition rules},
  and
$F\subseteq Q$ is a set of \emph{accepting states}.
A transition rule $(q,\phi,R,q')\in\delta$ is written as
$(q,\phi,R)\TO q'$ for readability.
A transition rule whose second element is $\varepsilon$ is called
an $\varepsilon$-rule.
For each $\varepsilon$-rule $(q,\varepsilon,R)\TO q'\in\delta$,
$R$ must be $\emptyset$.
\end{definition}

The second element $\phi$ of a transition rule is called
the \emph{guard condition} of that rule,
which is either an element of $\Phi_k$ or $\varepsilon$.
An expression $\phi\in\Phi_k$ represents the condition
under which
a transition rule can be applied to an element of a data word.
For example,
a transition rule with guard condition $p_1\land\Up_2$ can be
applied to the current position $i$ in an input data word $w$
if $p_1\in\Str{w_i}$ and $\Snd{w_i}$ equals
the contents of the second register.
An $\varepsilon$-rule represents a transition without
consuming the input.

The third element $R\subseteq[k]$ of a transition rule designates
registers to be updated.
$R$ can be empty, which represents no register is updated.

For a $k$-BRA $\calA = (Q,q_0,\delta,F)$, every triple
$(q,\theta,w')\in Q\times\Dat^k\times (\Sigma\times\Dat)^\omega$ is called
an \emph{instantaneous description} (ID) of $\calA$,
where
$q$, $\theta$, and $w'$ represent
the current state, the current assignment, and the remaining suffix
of the input data word, respectively.
We define the transition relation $\vdash_{\calA}$ over IDs
by the following inference rules:
\begin{align*}
&\begin{array}{l}
  (q,\phi,R)\TO q' \in\delta \quad
  w,i,\theta \models \phi \\
  \hline
  (q,\theta,\Rangew{i}{}) \vdash_{\calA} (q',\theta[R\Gets\Snd{w_i}],\Rangew{i+1}{})
\end{array} \\[\smallskipamount]
&\begin{array}{l}
  (q,\varepsilon,\emptyset)\TO q' \in \delta \quad
  i\in\Nat \\
  \hline
  (q,\theta,\Rangew{i}{}) \vdash_{\calA} (q',\theta,\Rangew{i}{})
\end{array}
\end{align*}
Note that
we can eliminate $\varepsilon$-rules in the same way as
usual non-deterministic finite automata,
because $\varepsilon$-rules in our definition cannot
impose any constraints on data values or update any registers.

\begin{example}
Consider a $2$-BRA over $\At=\{p_1,p_2,p_3\}$
that has a transition rule
$(q,p_1\land\neg p_3\land\Up_1,\{2\})\TO q'$.
Let $w=(\{p_1\},5)(\{p_1,p_3\},5)(\{p_1\},4)\ldots$ and
$\theta\in\Dat^2$ be an assignment with $\theta(1)=5$.
Then,
$(q,\theta,\Rangew{1}{})\vdash_{\calA}(q',\theta[\{2\}\Gets5],\Rangew{2}{})$,
but
the transition rule cannot be applied to
$(q,\theta,\Rangew{2}{})$ or $(q,\theta,\Rangew{3}{})$.
\end{example}

We say that $\calA$ \emph{accepts} $w$ if
there exist $q_F\in F$, $i_1,i_2,\ldots\in\Nat$, and
$\theta_1,\theta_2,\ldots\in\Dat^k$ such that
$i_1<i_2<\cdots$ and
$(q_0,\bot^k,w) \vdash_{\calA}^* (q_F,\theta_1,\Rangew{i_1}{})
\vdash_{\calA}^* (q_F,\theta_2,\Rangew{i_2}{})
\vdash_{\calA}^* \cdots$.
The data language \emph{recognized} by $\calA$ is defined as
$L(\calA) = \{ w \mid\calA$ accepts $w\}$.

\section{Transformation from $\mufo$-calculus into \Buchi\ RA}
\label{sec:muf-to-ra}

\begin{theorem}
  \label{th:muf-to-ra}
  Let $\sigma$ be a system of equations 
  over $\Var$, 
  whose main variable is $V_t$,
  in the normal form described in Lemma~\ref{lemma:normalform}.
  The following \Buchi\ RA $\calA = (Q,\sigma(V_t),\delta,F)$
  is equivalent to $\sigma$\,;
  i.e., $L(\calA)=\{w\mid w\models\sigma\}$.
\begin{align*}
  Q &= 
       \{\sigma(V_i) \mid V_i\in\Var\}, \\[\smallskipamount]
  \delta &= \{ (\TT,\TT,\emptyset)\TO\TT \} \\
  &\cup \{(V_i\lor V_j,\varepsilon,\emptyset)\TO\sigma(V_i) \mid
           V_i\lor V_j \in Q \} \\
  &\cup \{(V_i\lor V_j,\varepsilon,\emptyset)\TO\sigma(V_j) \mid
           V_i\lor V_j \in Q \} \\
  &\cup \{(\Down_R\neXt V_i \land \phi, \phi, R) \TO \sigma(V_i) \mid
           \Down_R\neXt V_i \land \phi \in Q \},
  \\[\smallskipamount]
  F &= 
       \{\sigma(\Vo)\mid\Vo\in\Varo\}.
\end{align*}
\end{theorem}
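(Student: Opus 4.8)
\noindent
The plan is to prove both inclusions of $L(\calA)=\{w\mid w\models\sigma\}$ from a single bridging lemma linking the least fixed point $\lfp(\Fsigmaw)$ to finite runs of $\calA$, namely: for all $V\in\Var$, $x\in\Varo$, positions $i\le j$, and assignments $\theta,\theta'$,
\[
  (i,\theta;j,\theta',x)\in\lfp(\Fsigmaw)(V)
  \iff
  (\sigma(V),\theta,\Rangew{i}{})\vdash_{\calA}^{*}(\sigma(x),\theta',\Rangew{j}{}).
\]
Because $\sigma$ is in the normal form of Lemma~\ref{lemma:normalform}, each $\sigma(V)$ is $V'\lor V''$, $\Down_R\neXt V'\land\phi$, or $\TT$, and the transitions of $\calA$ out of the state $\sigma(V)$ are exactly the $\varepsilon$-moves to $\sigma(V')$ and $\sigma(V'')$, the letter-consuming move that checks $\phi$, updates $R$, and enters $\sigma(V')$, or the self-loop on $\TT$ --- matching clause by clause the augmented relation $\models_{u}$ of Definition~\ref{def:ltlfo-semantics-fin}. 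I would prove ``$\Rightarrow$'' by induction on the least $n$ with $(i,\theta;j,\theta',x)\in\Fsigmaw^{n}(\emptyset)(V)$, which is legitimate since $\Fsigmaw$ is continuous and hence $\lfp(\Fsigmaw)=\bigcup_{n\ge0}\Fsigmaw^{n}(\emptyset)$; the tuple either comes from the $\omega$-seed $\{(i,\theta;i,\theta,V)\}$ of Definition~\ref{def:mapping-F} (matched by the empty run, or by a block of $\TT$-self-loops when $\sigma(V)=\TT$) or from $w,i,\theta,j,\theta',x\models_{\Fsigmaw^{n-1}(\emptyset)}\sigma(V)$, in which case one prepends to the run obtained from the induction hypothesis the transition dictated by the shape of $\sigma(V)$. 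I would prove ``$\Leftarrow$'' by induction on the run length: the length-$0$ case forces $\sigma(V)=\sigma(x)$, hence $V=x\in\Varo$ by the injectivity assumption on $\Varo$ in Definition~\ref{def:sys-of-eq}, so the $\omega$-seed supplies the tuple; the inductive step strips the first transition and reads off the matching clause of Definition~\ref{def:ltlfo-semantics-fin}, using that $\lfp(\Fsigmaw)$ is a fixed point of $\Fsigmaw$.

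For $\{w\mid w\models\sigma\}\subseteq L(\calA)$, assume $w,1,\bot^{k}\models_{\lfp(\Fsigmaw)}V_{t}$. Since the relation of Definition~\ref{def:ltlfo-semantics} has no base case, it holds precisely when an infinite chain of its stated form exists; unfolding it produces positions $1=j_{0}<j_{1}<j_{2}<\cdots$, assignments $\theta_{0}=\bot^{k},\theta_{1},\theta_{2},\ldots$, and $\omega$-variables $x_{1},x_{2},\ldots$ with $(j_{0},\theta_{0};j_{1},\theta_{1},x_{1})\in\lfp(\Fsigmaw)(V_{t})$ and $(j_{m},\theta_{m};j_{m+1},\theta_{m+1},x_{m+1})\in\lfp(\Fsigmaw)(x_{m})$ for all $m\ge1$. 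The bridging lemma turns every link into a finite run of $\calA$; since the target ID of one link is the source ID of the next, concatenating them yields one infinite run from $(\sigma(V_{t}),\bot^{k},\Rangew{1}{})$ passing through $(\sigma(x_{m}),\theta_{m},\Rangew{j_{m}}{})$ for every $m\ge1$, with $\sigma(x_{m})\in F$ and $j_{m}\to\infty$. As $\Varo$ is finite, some $q_{F}\in F$ equals $\sigma(x_{m})$ for infinitely many $m$; restricting to that subsequence gives $(q_{0},\bot^{k},w)\vdash_{\calA}^{*}(q_{F},\cdot,\cdot)\vdash_{\calA}^{*}(q_{F},\cdot,\cdot)\vdash_{\calA}^{*}\cdots$ with strictly increasing visit positions, so $w\in L(\calA)$.

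For $L(\calA)\subseteq\{w\mid w\models\sigma\}$, fix an accepting run of $\calA$ on $w$, write its repeated accepting state as $q_{F}=\sigma(V_{o})$ with $V_{o}\in\Varo$ (unique by the injectivity assumption), and let $i_{1}<i_{2}<\cdots$ and $\theta_{1},\theta_{2},\ldots$ be the data of the accepting visits, so that $(q_{0},\bot^{k},w)\vdash_{\calA}^{*}(\sigma(V_{o}),\theta_{1},\Rangew{i_{1}}{})\vdash_{\calA}^{*}(\sigma(V_{o}),\theta_{2},\Rangew{i_{2}}{})\vdash_{\calA}^{*}\cdots$. By the bridging lemma, $(i_{m},\theta_{m};i_{m+1},\theta_{m+1},V_{o})\in\lfp(\Fsigmaw)(V_{o})$ for all $m\ge1$, and (concatenating the first $m$ segments first) $(1,\bot^{k};i_{m},\theta_{m},V_{o})\in\lfp(\Fsigmaw)(V_{t})$ for all $m$. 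Then $S=\{(1,\bot^{k},V_{t})\}\cup\{(i_{m},\theta_{m},V_{o})\mid m\ge1\}$ is closed under the defining clause of Definition~\ref{def:ltlfo-semantics} relative to $\lfp(\Fsigmaw)$: each $(i_{m},\theta_{m},V_{o})$ is witnessed by $(i_{m+1},\theta_{m+1},V_{o})\in S$ since $i_{m}<i_{m+1}$, and $(1,\bot^{k},V_{t})$ is witnessed by any $(i_{m^{*}},\theta_{m^{*}},V_{o})\in S$ with $i_{m^{*}}>1$, which exists because $i_{m}\to\infty$. As the relation of Definition~\ref{def:ltlfo-semantics} is the greatest one with this closure property, $S$ is contained in it, so $w,1,\bot^{k}\models_{\lfp(\Fsigmaw)}V_{t}$, i.e.\ $w\models\sigma$.

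I expect the bridging lemma to be the main obstacle, and within it the delicate points are: matching the $\omega$-seed tuples $(i,\theta;i,\theta,V)$ with empty runs (and the self-loop clause of $\TT$) exactly; invoking the injectivity assumption on $\Varo$ to recover the variable $x$ from the target state $\sigma(x)$, so the two directions compose unambiguously even when distinct non-$\omega$ variables share a right-hand side; and, in the second inclusion, avoiding the trap that the first accepting visit may already occur at position $1$ by witnessing the start triple with a later visit. The rest is a routine case analysis driven by the normal form.
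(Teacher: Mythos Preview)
Your proposal is correct and follows essentially the same route as the paper: your bridging lemma is the paper's Lemma~\ref{lemma:mu-to-ra-fin} (stated with $(i,\theta;j,\theta',x)\in\lfp(\Fsigmaw)(V)$ in place of the equivalent $w,i,\theta,j,\theta',x\models_{\lfp(\Fsigmaw)}V$), proved in the same two inductions, and your two inclusions are the two directions of the paper's Lemma~\ref{lemma:mu-to-ra}, including the pigeonhole on $\Varo$ for the forward direction. You are in fact a bit more explicit than the paper in two places---the coinductive reading of Definition~\ref{def:ltlfo-semantics} that justifies the backward direction, and the edge case $i_{1}=1$ in acceptance---both of which the paper leaves implicit.
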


Theorem~\ref{th:muf-to-ra} is a direct consequence of
the following Lemmas~\ref{lemma:mu-to-ra-fin}
and \ref{lemma:mu-to-ra}:

\begin{lemma}
\label{lemma:mu-to-ra-fin}
  $w,i,\theta,j,\theta',x \models_{\lfp(\Fsigmaw)} V$
  iff
  $(\sigma(V),\theta,\Rangew{i}{}) \vdash_{\calA}^*
  (\sigma(x),\theta',\Rangew{j}{})$
  and
  $\sigma(x)\in F$.
\end{lemma}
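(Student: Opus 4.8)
The plan is to establish the two implications of the biconditional separately, each by an elementary induction, and to drive each case analysis by the normal form of $\sigma(V)$ from Lemma~\ref{lemma:normalform}, which by construction of $\calA$ lines up with the transitions leaving the state $\sigma(V)$. For the left-to-right implication I would induct on the index $n$ of the fixed-point approximant, using the identity $\lfp(\Fsigmaw)=\bigcup_{n=0}^{\infty}\Fsigmaw^{n}(\emptyset)$ recalled above; for the right-to-left implication I would induct on the length of the run $(\sigma(V),\theta,\Rangew{i}{})\vdash_\calA^{*}(\sigma(x),\theta',\Rangew{j}{})$.

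For the left-to-right direction, I would show by induction on $n$ that $(i,\theta;j,\theta',x)\in\Fsigmaw^{n}(\emptyset)(V)$ implies $(\sigma(V),\theta,\Rangew{i}{})\vdash_\calA^{*}(\sigma(x),\theta',\Rangew{j}{})$ and $\sigma(x)\in F$; the claim for $\lfp(\Fsigmaw)$ then follows since the tuple lies in a finite approximant. The base case $n=0$ is vacuous. In the step, a tuple in $\Fsigmaw^{n+1}(\emptyset)(V)$ is either the diagonal tuple $(i,\theta;i,\theta,V)$ contributed when $V\in\Varo$ — then $j=i$, $\theta'=\theta$, $x=V$, the empty run works, and $\sigma(x)=\sigma(V)\in F$ because $V\in\Varo$ — or it satisfies $w,i,\theta,j,\theta',x\models_{\Fsigmaw^{n}(\emptyset)}\sigma(V)$, which I unfold by the form of $\sigma(V)$. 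If $\sigma(V)=V_a\lor V_b$, the tuple is in $\Fsigmaw^{n}(\emptyset)(V_a)$ or $\Fsigmaw^{n}(\emptyset)(V_b)$, and I prepend the $\varepsilon$-rule $(V_a\lor V_b,\varepsilon,\emptyset)\TO\sigma(V_a)$ (or $\TO\sigma(V_b)$) to the run given by the induction hypothesis. If $\sigma(V)=\Down_R\neXt V_a\land\phi$, then $i<j$, $w,i,\theta\models\phi$, and $(i+1,\theta[R\Gets\Snd{w_i}];j,\theta',x)\in\Fsigmaw^{n}(\emptyset)(V_a)$, so I prepend $(\Down_R\neXt V_a\land\phi,\phi,R)\TO\sigma(V_a)$, whose side condition is precisely $w,i,\theta\models\phi$. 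If $\sigma(V)=\TT$, then $\theta=\theta'$, $x=\Vtt$, $\sigma(\Vtt)=\TT\in F$, and the run is obtained by firing $(\TT,\TT,\emptyset)\TO\TT$ exactly $j-i$ times, which is legitimate because $i\le j$ on the domain of $\models$.

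For the right-to-left direction, I would induct on the length of the run. If it is empty, then $\sigma(V)=\sigma(x)$, $\theta=\theta'$, $i=j$; from $\sigma(x)\in F$ and $x\in\Varo$, the injectivity assumption of Definition~\ref{def:sys-of-eq} forces $V=x\in\Varo$, so the diagonal tuple $(i,\theta;i,\theta,V)$ lies in $\Fsigmaw(\emptyset)(V)\subseteq\lfp(\Fsigmaw)(V)$, which is exactly $w,i,\theta,j,\theta',x\models_{\lfp(\Fsigmaw)}V$. Otherwise I case-split on the syntactic form of $\sigma(V)$, which fixes the rules that may fire first. From $V_a\lor V_b$ the first move is an $\varepsilon$-step to $\sigma(V_a)$ or $\sigma(V_b)$; the induction hypothesis on the tail gives $w,i,\theta,j,\theta',x\models_{\lfp(\Fsigmaw)}V_a$ (or $V_b$), hence $\models_{\lfp(\Fsigmaw)}\sigma(V)$, hence $\models_{\lfp(\Fsigmaw)}V$ by the fixed-point equation. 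From $\Down_R\neXt V_a\land\phi$ the first move consumes $w_i$ (so $i<j$), needs $w,i,\theta\models\phi$, and reaches $(\sigma(V_a),\theta[R\Gets\Snd{w_i}],\Rangew{i+1}{})$; the induction hypothesis together with Definition~\ref{def:ltlfo-semantics-fin} then yields $w,i,\theta,j,\theta',x\models_{\lfp(\Fsigmaw)}\sigma(V)$, hence $\models_{\lfp(\Fsigmaw)}V$. From $\TT$ the only applicable rule is the self-loop $(\TT,\TT,\emptyset)\TO\TT$, so the whole run stays in state $\TT$, whence $\theta'=\theta$, $\sigma(x)=\TT$, and thus $x=\Vtt$ by injectivity; since $i\le j$, the tuple $(i,\theta;j,\theta,\Vtt)$ belongs to $\lfp(\Fsigmaw)(V)$.

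I expect the main difficulty to lie in bookkeeping rather than in any subtle argument: throughout both inductions one must keep the ``next-witness'' component $(j,\theta',x)$ of the $5$-tuple aligned with the endpoint of the automaton run, and one must invoke the injectivity hypothesis on $\Varo$ exactly where it is needed — to deduce that reaching an accepting state $\sigma(x)$ determines the variable $x$, and that $\Vtt$ is the unique variable whose equation is $\TT$. One further point that needs attention, though it causes no real trouble, is that $\sigma$ need not be injective on all of $\Var$, so a state of $\calA$ identifies a variable only up to having the same right-hand side; this is harmless because the transition rules of $\calA$ depend only on the syntactic form of that right-hand side.
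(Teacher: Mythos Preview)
Your proposal is correct and follows essentially the same approach as the paper: the forward direction is by induction on the fixed-point approximant index with a case split on the normal form of $\sigma(V)$ (handling the diagonal tuple for $V\in\Varo$ separately), and the backward direction is by induction on the length of the run, using the injectivity assumption on $\Varo$ in the base case and in the $\TT$ case exactly as you do. The only cosmetic difference is that in the backward direction the paper exhibits a concrete approximant $u^\ell$ at each step, whereas you work directly with $\lfp(\Fsigmaw)$ and appeal to the fixed-point equation; the two formulations are interchangeable.
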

\begin{proof}
Let $u^\ell = \Fsigmaw^\ell(\emptyset)$.

$(\Rightarrow)$
We prove that
$w,i,\theta,j,\theta',x \models_{u^\ell} V$
implies
the right-hand side
by induction on $\ell$.
When $\ell=0$, it vacuously holds since
$u^0=\emptyset$ and
$w,i,\theta,j,\theta',x \models_{\emptyset} V$
never holds.
Assume that
$w,i,\theta,j,\theta',x \models_{u^{\ell+1}} V$
for some
$\ell,V,i,j,\theta,\theta',x$ where $\ell\ge0$ and $i\le j$.
Since $x\in\Varo$,
$\sigma(x)\in F$ obviously holds.
By the definition of $\models_{u^{\ell+1}}$,
$(i,\theta; j,\theta',x)\in u^{\ell+1}(V)$.
By the definition of $\Fsigmaw$, there are two cases:
$w,i,\theta,j,\theta',x \models_{u^\ell} \sigma(V)$,
or $V\in\Varo$ and $(i,\theta,V)=(j,\theta',x)$.
In the latter case,
$(\sigma(V),\theta,\Rangew{i}{}) \vdash_{\calA}^*
(\sigma(x),\theta',\Rangew{j}{})$ 
trivially holds.
We consider the former case and conduct a case analysis
on $\sigma(V)$.
By Lemma~\ref{lemma:normalform}, there are three cases:

({i}) \ When $\sigma(V)=V_1\lor V_2$,
by the definition of $\models_{u^\ell}$,
either
$w,i,\theta,j,\theta',x \models_{u^\ell} V_1$ or
$w,i,\theta,j,\theta',x \models_{u^\ell} V_2$ holds.
In either case
$(V_1\lor V_2,\theta,\Rangew{i}{}) \vdash_{\calA}^*
(\sigma(x),\theta',\Rangew{j}{})$ holds
by the induction hypothesis
and the fact that $\calA$ has transition rules
$(V_1\lor V_2, \varepsilon, \emptyset)\to \sigma(V_1)$ and
$(V_1\lor V_2, \varepsilon, \emptyset)\to \sigma(V_2)$.

({ii}) \ When $\sigma(V)=\Down_R\neXt V_1\land\phi$,
by the definition of $\models_{u^\ell}$,
$i<j$ and $w,i,\theta\models\phi$ and
$(w,i+1,\theta[R\Gets\Snd{w_i}],\allowbreak
j,\theta',x\allowbreak \models_{u^\ell} V_1)$ hold.
By the definition of $\calA$ and the induction hypothesis,
$(\Down_R\neXt V_1\land\phi,\theta,\Rangew{i}{})
\vdash_{\calA} (\sigma(V_1),\theta[R\Gets\Snd{w_i}],\allowbreak
\Rangew{i+1}{})\allowbreak
\vdash_{\calA}^*
(\sigma(x),\theta',\Rangew{j}{})$. 

({iii}) \ When $\sigma(V)=\TT$,
by the definition of $\models_{u^\ell}$,
$\theta=\theta'$ and $x=\Vtt$ hold.
Since $\calA$ has transition rule $(\TT,\TT,\emptyset)\to\TT$,
$(\TT,\theta,\Rangew{i}{}) \vdash_{\calA}^*
(\TT,\theta,\Rangew{j}{})$ for any $i$ and $j$ with $i\le j$.

$(\Leftarrow)$
The backward direction can be proved by showing
the right-hand side implies $\exists \ell\colon
\left(w,i,\theta,j,\theta',x \models_{u^\ell} V\right)$
by induction on the number of state transitions.
By $\sigma(x)\in F$, $x\in\Varo$ holds.

(The base case)
Assume that
$(\sigma(V),\theta,\Rangew{i}{})=
 (\sigma(x),\allowbreak\theta',\allowbreak\Rangew{j}{})$.
Since $x\in\Varo$, $x=V$ holds because we have assumed that
$\sigma(V_1)=\sigma(V_2)$ implies $V_1=V_2$ for every
$V_1\in\Varo$.
By the definition of $\Fsigmaw$,
$(i,\theta;i,\theta,V)\in u^1(V)$ holds,
and thus
$w,i,\theta,i,\theta,x \models_{u^1} V$.

(The induction step)
Assume that
$(\sigma(V),\theta,\Rangew{i}{})\vdash_{\calA}
 (q_1,\theta_1,\Rangew{i_1}{})\vdash_{\calA}^*
 (\sigma(x),\theta',\Rangew{j}{})$
where $i\le i_1\le j$.
We conduct a case analysis on $\sigma(V)$.

({i}) \ When $\sigma(V)=V_1\lor V_2$,
by the definition of $\calA$,
$(\theta_1,i_1)=(\theta,i)$ and $q_1$ should equal
either $\sigma(V_1)$ or $\sigma(V_2)$.
When $q_1=\sigma(V_1)$ (resp.\ $\sigma(V_2)$),
by the induction hypothesis,
there exists some $\ell$ that satisfies
$w,i,\theta,j,\theta',x \models_{u^{\ell}} V_1$
(resp.\ $V_2$).
Therefore in either case
$w,i,\theta,j,\theta',x \models_{u^{\ell}} V_1\lor V_2$
holds,
which implies that
$w,i,\theta,j,\theta',x \models_{u^{\ell+1}} V$
regardless of whether $V\in\Varo$ or not.

({ii}) \ When $\sigma(V)=\Down_R\neXt V_1\land\phi$,
by the definition of $\calA$,
$(q_1,\theta_1,i_1)=
 (\sigma(V_1),\theta[R\Gets\Snd{w_i}],i+1)$
and $w,i,\theta\models\phi$.
By the induction hypothesis,
there exists some $\ell$ that satisfies
$\left(w,i+1,\theta[R\Gets\Snd{w_i}],j,\theta',x
 \models_{u^{\ell}} V_1\right)$,
which implies
$w,i,\theta,j,\theta',x
 \models_{u^{\ell}} \sigma(V)$ and thus
$w,i,\theta,j,\theta',x
 \models_{u^{\ell+1}} V$.

({iii}) \ When $\sigma(V)=\TT$,
by the definition of $\calA$,
$(q_1,\theta_1,i_1)=(\TT,\theta,i+1)$.
Since $(\TT,\TT,\emptyset)\to \TT$
is the only transition rule of $\calA$
that can be applied to an ID whose state is $\TT$,
$(\sigma(x),\theta')$ should equal $(\TT,\theta)$.
In the same way as the base case,
we can show $V=x=\Vtt$.
Therefore
$w,i,\theta,j,\theta,x\models_{\emptyset} \TT$ and
thus
$w,i,\theta,j,\theta,x\models_{u^1} V$.
\QED
\end{proof}

\begin{lemma}
\label{lemma:mu-to-ra}
  $w,i,\theta\models_{\lfp(\Fsigmaw)} V$
  iff
  $(\sigma(V),\theta,\Rangew{i}{}) \vdash_{\calA}^*
  (q_F,\theta_1,\Rangew{i_1}{}) \vdash_{\calA}^*
  (q_F,\theta_2,\Rangew{i_2}{}) \vdash_{\calA}^* \cdots$
  for some
  $q_F\in F$ and
  $i_1,\theta_1,i_2,\theta_2,\ldots$
  such that $i<i_1<i_2<\cdots$.
\end{lemma}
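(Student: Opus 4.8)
The plan is to string together Lemma~\ref{lemma:mu-to-ra-fin} along an infinite sequence of ``checkpoints'' and then use the finiteness of $F$ to replace the (possibly varying) accepting states at those checkpoints by a single one that repeats. First recall that, by Definition~\ref{def:ltlfo-semantics}, $w,i,\theta\models_{\lfp(\Fsigmaw)}V$ is the coinductively defined assertion that there exist positions $i<i_1<i_2<\cdots$, assignments $\theta_1,\theta_2,\ldots$, and $\omega$-variables $x_1,x_2,\ldots\in\Varo$ with $(i,\theta;i_1,\theta_1,x_1)\in\lfp(\Fsigmaw)(V)$ and $(i_m,\theta_m;i_{m+1},\theta_{m+1},x_{m+1})\in\lfp(\Fsigmaw)(x_m)$ for every $m\ge1$: one unfolding of the clause yields the first checkpoint, and re-applying the clause to $w,i_m,\theta_m\models_{\lfp(\Fsigmaw)}x_m$ yields the next. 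Writing $x_0:=V$, $i_0:=i$, $\theta_0:=\theta$, the data to be manipulated is exactly the chain $(i_{m-1},\theta_{m-1};i_m,\theta_m,x_m)\in\lfp(\Fsigmaw)(x_{m-1})$ for $m\ge1$.

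For the forward direction I would feed each link of this chain into Lemma~\ref{lemma:mu-to-ra-fin}, using the identity $w,a,\alpha,b,\beta,x\models_u V\iff(a,\alpha;b,\beta,x)\in u(V)$ from Definition~\ref{def:ltlfo-semantics-fin}, to obtain $(\sigma(V),\theta,\Rangew{i}{})\vdash_{\calA}^*(\sigma(x_1),\theta_1,\Rangew{i_1}{})\vdash_{\calA}^*(\sigma(x_2),\theta_2,\Rangew{i_2}{})\vdash_{\calA}^*\cdots$ with $\sigma(x_m)\in F$ for every $m\ge1$. Since $F$ is finite, some $q_F\in F$ satisfies $\sigma(x_m)=q_F$ for infinitely many indices $m$; restricting the run to exactly those checkpoints leaves a run of the same shape through $q_F$, whose associated positions form a subsequence of $i_1<i_2<\cdots$ and hence are still strictly increasing and still above $i$. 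This is precisely the right-hand side, and it is worth noting that $\sigma(V)$ itself need not lie in $F$; only the infinitely repeated state must.

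For the backward direction I would argue coinductively. Given $(\sigma(V),\theta,\Rangew{i}{})\vdash_{\calA}^*(q_F,\theta_1,\Rangew{i_1}{})\vdash_{\calA}^*(q_F,\theta_2,\Rangew{i_2}{})\vdash_{\calA}^*\cdots$ with $i<i_1<i_2<\cdots$ and $q_F\in F$, let $x\in\Varo$ be the unique $\omega$-variable with $\sigma(x)=q_F$ (uniqueness by the standing assumption that $\sigma(V)=\sigma(V')$ forces $V=V'$ when $V\in\Varo$). Applying Lemma~\ref{lemma:mu-to-ra-fin} to the first segment gives $(i,\theta;i_1,\theta_1,x)\in\lfp(\Fsigmaw)(V)$, and to each later segment gives $(i_m,\theta_m;i_{m+1},\theta_{m+1},x)\in\lfp(\Fsigmaw)(x)$. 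I would then take the candidate set $S=\{(i,\theta,V)\}\cup\{(i_m,\theta_m,x)\mid m\ge1\}$ and verify that $S$ is a post-fixed point of the monotone operator underlying Definition~\ref{def:ltlfo-semantics}, i.e.\ that every element of $S$ has a witness $(j,\theta'',y)$ for that clause with $(j,\theta'',y)\in S$: for $(i,\theta,V)$ the triple $(i_1,\theta_1,x)$ works, and for $(i_m,\theta_m,x)$ the triple $(i_{m+1},\theta_{m+1},x)$ works, all with the required strict inequality on positions. Coinduction then yields $w,i,\theta\models_{\lfp(\Fsigmaw)}V$.

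The main obstacle is the bookkeeping around the two infinite processes: on the backward side, turning the given infinite $\vdash_{\calA}^*$-run into a productive corecursive witness for the coinductive predicate (this is exactly where the Coq development must be careful), and on the forward side, checking that the pigeonhole extraction of $q_F$ preserves strict monotonicity of the checkpoint positions and that the concatenation of the surviving segments is still a single valid $\vdash_{\calA}^*$-derivation. Both are routine once the chain decomposition is in place, but they need to be spelled out precisely.
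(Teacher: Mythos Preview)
Your proposal is correct and follows essentially the same route as the paper: unfold Definition~\ref{def:ltlfo-semantics} into an infinite chain, push each link through Lemma~\ref{lemma:mu-to-ra-fin}, and apply pigeonhole on the finite set of accepting states (the paper pigeonholes on $\Varo$ instead of $F$, which is equivalent). Your backward direction is more explicit than the paper's, which simply asserts $w,i,\theta\models_u V$ after exhibiting the infinite chain of 5-tuples without naming the coinduction principle; your post-fixed-point presentation is exactly what is needed to make that step rigorous, and is in the spirit of the accompanying Coq development.
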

\begin{proof}
Let $u=\lfp(\Fsigmaw)$.

$(\Rightarrow)$
Since $w,i,\theta\models_u V$,
there exist
$i_1,\theta_1,x_1$ such that
$(i,\theta;i_1,\theta_1,x_1)\in u(V)$,
$i<i_1$, and
$w,i_1,\theta_1\models_u x_1$.
If $x_1\in\Varo$, then
there exist $i_2,\theta_2,x_2$ such that
$(i_1,\theta_1;i_2,\theta_2,x_2)\in u(x_1)$,
$i_1<i_2$, and
$w,i_2,\theta_2\models_u x_2$.
If $x_2\in\Varo$, then
there exist $i_3,\theta_3,x_3$ such that
$(i_2,\theta_2;i_3,\theta_3,x_3)\in u(x_2)$,
$i_2<i_3$, and
$w,i_3,\theta_3\models_u x_3$,
and so on.
By Lemma~\ref{lemma:mu-to-ra-fin},
$(\sigma(V),\theta,\Rangew{i}{}) \vdash_{\calA}^*
 (\sigma(x_1),\theta_1,\Rangew{i_1}{}) \vdash_{\calA}^*
 (\sigma(x_2),\theta_2,\Rangew{i_2}{}) \vdash_{\calA}^*
\cdots$.
There must exist
$\Vo\in\Varo$ that appears infinitely often in the sequence
$x_1,x_2,\ldots\,$,
and the statement holds by letting $q_F=\sigma(\Vo)$.

$(\Leftarrow)$
When $q_F=\sigma(\Vo)$ for some $\Vo\in\Varo$,
$(i,\theta;i_1,\theta_1,\Vo)\in u(V)$,
$(i_1,\theta_1;i_2,\theta_2,\Vo)\in u(\Vo)$,
\ldots\ 
by Lemma~\ref{lemma:mu-to-ra-fin}
because
$(\sigma(V),\theta,\Rangew{i}{}) \vdash_{\calA}^*
(\sigma(\Vo),\theta_1,\Rangew{i_1}{}) \vdash_{\calA}^*
(\sigma(\Vo),\theta_2,\Rangew{i_2}{}) \vdash_{\calA}^*
\cdots$.
Therefore
$w,i,\theta\models_u V$.
\QED
\end{proof}

\begin{example}
A \Buchi\ RA equivalent to $\sigma_1$
in Example~\ref{example:sigma1}
is shown in Figure~\ref{fig:muf-to-ra}.
By additionally designating
the state $V_1\lor V'_2$ $(=\sigma_2(V_2))$ as an accepting state
in Figure~\ref{fig:muf-to-ra},
we obtain
a \Buchi\ RA equivalent to $\sigma_2$ in Example~\ref{example:sigma2}.
\end{example}

\begin{figure}[tb]\centering
  \begin{tikzpicture}
    [x=0.9cm,y=0.9cm,
     state/.style={rounded rectangle,draw,text height=2.2ex,text depth=.9ex}]
    \node[initial,state,anchor=east] (q0) at (-1.8,0) {$\,\Down_1\neXt V_2\,$};
    \node[state,anchor=center] (q1) at (0,0) {$\,V_1\lor V'_2\,$};
    \node[state,anchor=west] (q2) at (2.5,0) {$\,\neXt V_2\land(\neg\Up_1\land p_1)\,$};
    \node[state,anchor=center] (q3) at (0,-1.5) {$\,\neXt\TT\land\Up_1\,$};
    \node[state,accepting,anchor=west] (q4) at (2.5,-1.5) {$\TT$};
    \path
      (q0) edge node {$\TT,\{1\}$} (q1)
      (q1) edge node [below] {$\varepsilon$} (q2)
      (q2) edge [out=174,in=10] node [above] {$\neg\Up_1\land p_1,\emptyset$} (q1)
      (q1) edge node {$\varepsilon$} (q3)
      (q3) edge node {$\Up_1,\emptyset$} (q4)
      (q4) edge [loop right] node {$\TT,\emptyset$} (q4);
  \end{tikzpicture}
  \caption{\Buchi\ RA equivalent to
      $\{\,\Vtt=\TT$, $V_1 = \Up_1$,
          $V_2 = V_1 \lor(\neXt V_2\land\allowbreak
          (\neg\Up_1\land p_1))$,
          $V_3 = \Down_1\neXt V_2 \,\}$
           when $\Varo=\{\Vtt\}$.}
  \label{fig:muf-to-ra}
\end{figure}

\section{Transformation from \Buchi\ RA into $\mufo$-calculus}
\label{sec:ra-to-muf}

\begin{theorem}\label{th:ra-to-muf}
Let $\calA = (Q,q_0,\delta,F)$ be a \Buchi\ RA
without $\varepsilon$-rules,
where every $q\in Q$ has at least one state transition from it.
We define the set $\Var$ of variables,
the set $\Varo$ of \omegavariable{}s, and
the system $\sigma$ of equations over $\Var$,
whose main variable is $V_{q_0}$, as follows.%
\footnote{%
  We let $\Var$ contain $\Vtt$ only for conforming
  to Definition~\ref{def:sys-of-eq}.
}
Then, $\sigma$ is equivalent to~$\calA$.
\begin{align*}
\Var &= \{V_q \mid q\in Q\} \cup \{V_r \mid r\in\delta\}
  \cup \{\Vtt\}, \\
\Varo &= \{V_q\mid q\in F\} \cup \{\Vtt\},
 \\[\smallskipamount]
 \sigma &
\begin{aligned}[t]
    {}={}   &\textstyle
             \{ V_q = \bigvee_{r\in\delta|_q}V_r \mid q\in Q\}\\
    {}\cup{}&\{ V_r=\Down_R\, \neXt V_{q'} \land \phi \mid r\in\delta,\
                r=(q,\phi,R) \TO q' \} \\
    {}\cup{}&\{ \Vtt = \TT \}, \\
  \end{aligned}
\end{align*}
where
  $\delta|_q = \{r\in\delta\mid r=(q,\phi,R) \TO q'$
     for $\exists\phi,R,q'\}$.
\end{theorem}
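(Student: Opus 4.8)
The plan is to prove the two inclusions $L(\calA)\subseteq\{w\mid w\models\sigma\}$ and $\{w\mid w\models\sigma\}\subseteq L(\calA)$ by establishing a tight correspondence between runs of $\calA$ and the 5-tuples in $\lfp(\Fsigmaw)$, mirroring the structure of Lemmas~\ref{lemma:mu-to-ra-fin} and \ref{lemma:mu-to-ra} but in the reverse direction. The key technical bridge is a finite-segment lemma: for every $q,q'\in Q$, positions $i\le j$, and assignments $\theta,\theta'$,
\begin{equation*}
(q,\theta,\Rangew{i}{}) \vdash_{\calA}^* (q',\theta',\Rangew{j}{})
\quad\text{iff}\quad
w,i,\theta,j,\theta',V_{q'} \models_{\lfp(\Fsigmaw)} V_q
\ \text{ provided } q'\in F,
\end{equation*}
together with its variant for non-accepting intermediate states. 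Here I would exploit that $\sigma$ as constructed is already essentially in the normal form of Lemma~\ref{lemma:normalform}: the equations $V_q=\bigvee_{r\in\delta|_q}V_r$ are disjunctions (nested to binary form by the usual reassociation), the equations $V_r=\Down_R\neXt V_{q'}\land\phi$ are exactly the modal clauses, and $\Vtt=\TT$. So the induction mimics the case analysis in Lemma~\ref{lemma:mu-to-ra-fin} almost verbatim, with $V_r$-variables playing the role of the intermediate variables introduced by normalization.

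First I would set up the forward-segment direction: show by induction on the length of a derivation $(q,\theta,\Rangew{i}{})\vdash_{\calA}^*(q',\theta',\Rangew{j}{})$ with $q'\in F$ that there is some $\ell$ with $w,i,\theta,j,\theta',V_{q'}\models_{u^\ell} V_q$, where $u^\ell=\Fsigmaw^\ell(\emptyset)$. The base case ($q=q'$, $i=j$) uses the $\{(i,\theta;i,\theta,V_q)\mid\ldots\}$ seed in $\Fsigmaw$ that is present precisely because $q'\in F$ means $V_{q'}\in\Varo$. The induction step splits on the first transition: a genuine rule $r=(q,\phi,R)\TO q''$ first routes $V_q$ through the disjunct $V_r$ (adding one unfolding) and then through the modal clause $V_r=\Down_R\neXt V_{q''}\land\phi$, matching the single $\vdash_{\calA}$ step; an $\varepsilon$-step cannot occur since $\calA$ has no $\varepsilon$-rules. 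Conversely, I would show $w,i,\theta,j,\theta',V_{q'}\models_{\lfp(\Fsigmaw)} V_q$ implies the corresponding derivation, again by induction on the approximant index $\ell$ and case analysis on $\sigma(V_q)$ versus $\sigma(V_r)$. Then the full (infinite-run) statement follows exactly as in Lemma~\ref{lemma:mu-to-ra}: chaining segments through a state $q_F\in F$ visited infinitely often corresponds to an infinite dependence chain through $V_{q_F}\in\Varo$, and $w\models\sigma$ is $w,1,\bot^k\models_{\lfp(\Fsigmaw)}V_{q_0}$, which by the segment lemma unfolds into an accepting run starting from $(q_0,\bot^k,w)$, and vice versa.

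The main obstacle I anticipate is bookkeeping around the accepting condition, specifically the requirement that the \emph{intermediate} state $q'$ in the segment lemma lie in $F$. In Lemma~\ref{lemma:mu-to-ra-fin} this was automatic because $x$ ranged over $\Varo$; here I must ensure that an arbitrary finite derivation segment between two $F$-states is captured, while segments that pass through non-$F$ states only in the interior are still handled — this is fine because $V_q$ for $q\notin F$ is an ordinary (non-$\omega$) variable whose finitely-many unfoldings suffice, but one must phrase the segment lemma so that the endpoint $q'$ is in $F$ and every interior visit to $F$ is \emph{not} forced to be a segment boundary, i.e. the lemma talks about \emph{some} decomposition rather than \emph{every} one. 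A secondary nuisance is the assumption in Definition~\ref{def:sys-of-eq} that $\sigma(V)=\sigma(V')$ with $V\in\Varo$ forces $V=V'$; since distinct accepting states $q\ne q'$ both get $\sigma(V_q)=\sigma(V_{q'})=\bigvee_{r\in\delta|_q}V_r$ only if $\delta|_q=\delta|_{q'}$, which can genuinely happen, I would invoke the footnote's remedy (replace $\sigma(V_q)$ by $\sigma(V_q)\lor(\FF\land\neXt\TT)$, tagged per state) to restore the invariant without changing the recognized language, and note that the normalization and all the inductions above go through unchanged for this harmless variant. Finally, I would remark that the variable $\Vtt$ and equation $\Vtt=\TT$ play no role in any run of $\calA$ (no transition targets a state that would become $\Vtt$) and are present solely to satisfy Definition~\ref{def:sys-of-eq}, so they do not affect equivalence.
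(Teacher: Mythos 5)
Your proposal is correct, but it takes a genuinely different route from the paper. The paper does not re-run any fixed-point induction for Theorem~\ref{th:ra-to-muf}: it simply applies the construction of Theorem~\ref{th:muf-to-ra} to the system $\sigma$ built from $\calA$, obtaining a second automaton $\calA'$ with $L(\calA')=\{w\mid w\models\sigma\}$, and then proves $L(\calA)=L(\calA')$ by a purely automaton-level simulation argument (each $\calA$-step $(q,\phi,R)\TO q'$ corresponds to the two-step $\calA'$-path through the intermediate state $\sigma(V_r)$, and conversely every $\calA'$-run splits at its visits to states of the form $\sigma(V_q)$). You instead prove the semantic equivalence directly by re-deriving the finite-segment correspondence between $\calA$-runs ending in an $F$-state and $5$-tuples in $\lfp(\Fsigmaw)$, then chaining segments as in Lemma~\ref{lemma:mu-to-ra}. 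Your route is self-contained and sound --- the induction you sketch does mirror Lemmas~\ref{lemma:mu-to-ra-fin} and~\ref{lemma:mu-to-ra} with the $V_r$ playing the role of the normalization variables --- but it duplicates machinery that the paper gets for free by round-tripping through Theorem~\ref{th:muf-to-ra}; what you buy in exchange is independence from that theorem. One small correction: your worry that distinct accepting states $q\ne q'$ might receive identical right-hand sides is unfounded, because $\delta|_q$ and $\delta|_{q'}$ are disjoint whenever $q\ne q'$ and are non-empty by the hypothesis that every state has an outgoing transition, so $\sigma(V_q)\ne\sigma(V_{q'})$ holds automatically (the paper notes exactly this); the footnote remedy you invoke is harmless but unnecessary.
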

\begin{proof}
By applying Theorem~\ref{th:muf-to-ra} to $\sigma$
constructed from a given $\calA$ as above,
we obtain the following BRA
$\calA'=(Q',q'_0,\delta',F')$.%
\footnote{
  State $\TT$ of $\calA'$,
  which is brought by $\Vtt$, has no effect
  because it is not reachable from the initial state.
}
(In the construction of $\calA'$,
we use an extended version of the construction in Theorem~\ref{th:muf-to-ra}
where a disjunction $V_{j_1}\lor\cdots\lor V_{j_m}$
of any number of variables
is allowed as the right-hand side of an equation,
and transition rule
$(V_{j_1}\lor\cdots\lor V_{j_m},\varepsilon,\emptyset)\TO\sigma(V_{j_i})$
for each $i\in[m]$ is constructed for it.
This extension does not affect the validity of Theorem~\ref{th:muf-to-ra}.)
\begin{align*}
  Q' &= Q'_Q \cup Q'_\delta \cup \{\TT\}, \\
     &\
     \begin{aligned}[t]
       & Q'_Q =\{\sigma(V_q) \mid q\in Q\}, \
         Q'_\delta = \{\sigma(V_r) \mid r\in\delta\},
     \end{aligned} \\
  q'_0 &= \sigma(V_{q_0}), \\
  \delta' &
  \begin{aligned}[t]
    {}={}   &\{ (\TT,\TT,\emptyset)\TO\TT \}\\
    {}\cup{}&\{ (\sigma(V_q),\varepsilon,\emptyset)\TO\sigma(V_r)
                \mid q\in Q,\ r\in\delta|_q\} \\
    {}\cup{}&\{ (\sigma(V_r),\phi,R)\TO\sigma(V_{q'})
                \mid r=(q,\phi,R)\TO q'\in\delta \},
  \end{aligned}
  \\
  F' &= \{ \sigma(V_q)\mid q\in F \} \cup \{\TT\}.
\end{align*}

It is sufficient to show $L(\calA)=L(\calA')$.
We prove the following more general statement:%
\footnote{
  Since state $\TT$ of $\calA'$ is not reachable from
  the initial state,
  an accepting computation of $\calA'$ should visit
  state $\sigma(V_q)$ for some $q\in F$
  infinitely often, and thus
  statement~(\ref{eq:ra-to-muf}) is sufficient to show
  $L(\calA)=L(\calA')$.
}
\begin{equation}
\begin{aligned}
& \forall w, q,q', i,j, \theta, \theta'
 \text{ such that } i\le j\colon\\
& 
  \quad
  (q,\theta,\Rangew{i}{})\vdash_{\!\calA}^* (q',\theta',\Rangew{j}{})
 \\ &\quad {}\Leftrightarrow 
  (\sigma(V_q),\theta,\Rangew{i}{})\vdash_{\!\calA'}^*
  (\sigma(V_{q'}),\theta',\Rangew{j}{}).
\end{aligned}
\label{eq:ra-to-muf}
\end{equation}
Note that
for $q,q'\in Q$, $q\ne q'$ implies
$\sigma(V_q)\neq\sigma(V_{q'})$,
since
$\delta|_q$ and $\delta|_{q'}$ are disjoint
if $q\ne q'$.

$(\Rightarrow)$
It is sufficient to show that
$(q,\theta,\Rangew{i}{})\vdash_{\!\calA} (q',\theta',\Rangew{j}{})$
implies
$(\sigma(V_q),\theta,\Rangew{i}{})\vdash_{\!\calA'}^*
(\sigma(V_{q'}),\theta',\Rangew{j}{})$.
This can be easily proved by the contents of $\delta'$.

$(\Leftarrow)$
Since we can split a transition sequence
$(\sigma(V_q),\theta,\Rangew{i}{})\vdash_{\!\calA'}^+
(\sigma(V_{q'}),\theta',\Rangew{j}{})$
at every visit to an element of $Q'_Q$,
it is sufficient to prove statement~(\ref{eq:ra-to-muf})
for a transition sequence
where no element of $Q'_Q$ appears
except the first and last IDs.
To enable this transition sequence,
$\calA'$ must have transition rules
$(\sigma(V_q),\varepsilon,\emptyset)\to\sigma(V_r)$
and
$(\sigma(V_r),\phi,R)\to\sigma(V_{q'})$
for some $r=(q,\phi,R)\to q' \in\delta$
such that $w,i,\theta\models\phi$ and
$(\theta',j) = (\theta[R\Gets\Snd{w_i}],i+1)$.
Hence, $(q,\theta,\Rangew{i}{})\vdash_{\!\calA}^*
(q',\theta',\Rangew{j}{})$.
\QED
\end{proof}

\section{Conclusion}
\label{sec:conclusion}

In this paper,
we proposed
$\mufo$-calculus,
which is a subclass of $\mu$-calculus with the freeze quantifier and
has the same expressive power as
\Buchi\ RA\@.
We showed a transformation from a system of equations in $\mufo$-calculus
to an equivalent \Buchi\ RA as well as its reverse.
We proved the correctness of the transformations.
We have also provided formal proofs of those theorems
using the Coq proof assistant.%
\footnote{%
The proof scripts are available at
\url{https://github.com/ytakata69/proof-mucal-bra}.}

Future work includes
developing a formal verification method
based on RA and the proposed calculus.

\bibliographystyle{ieicetr}
\bibliography{rpds}

\end{document}